\newtheorem{thm}{Теорема}
\newtheorem{defn}{Определение}
\newtheorem{lemma}{Лемма}
\newtheorem{pro}{Утверждение}
\newtheorem{rk}{Замечание}
\newtheorem{cor}{Следствие}
\numberwithin{equation}{section} \setcounter{tocdepth}{1}
\newcommand{\bea}{\begin{eqnarray}}
\newcommand{\eea}{\end{eqnarray}}
\newcommand{\Q}{\mathbb{Q}}
\begin{document}
\title[$p$-адические меры Гиббса для модели ТС]
{$p$-адическая модель Твердых Сфер с тремя состояниями на дереве Кэли}

\author{О.\ Н.\ Хакимов}

\address{O.\ N.\ Khakimov\\ Институт математики, ул. Дурмон йули, 29, Ташкент, 100125,
Узбекистан.} \email {hakimovo@mail.ru}

\maketitle

\begin{abstract} В этой работе мы изучим $p$-адическую модель
(твердых сфер) ТС  с тремя состояниями на дереве Кэли.
При $k=2$ изучим трансляционно-инвариантные и периодические $p$-адические
меры Гиббса для модели ТС. Докажем, что при $p\neq2$ любая
$p$-адическая мера Гиббса является ограниченной. В частности,
будут показаны не существование сильного фазового перехода для модели ТС
на дереве Кэли порядка $k$.
\end{abstract}
\maketitle

{\bf{Ключевые слова:}} дерево Кэли, конфигурация, мера Гиббса,
модель TC, трансляционно-инвариантная мера, $p$-адические
числа.

\section{Определения и факты}

В работе \cite{MRS},\cite{rsh} были изучены вещественные
гиббсовские меры для модели ТС с тремя состояниями на
дереве Кэли порядка $k\geq1$. В этой работе мы изучим
$p$-адический аналог этой модели.

Известно, что $p$-адические модели в физике не
могут быть описаны, используя обычную теорию вероятностей \cite{24,34,48}. В
\cite{24} абстрактная $p$-адическая теория вероятностей была
развита посредством теории неархимедовых мер \cite{40}.
Вероятностные процессы на поле $p$-адических чисел были изучены
многими авторами (см. \cite{1a,3a,51,rh}). Не-архимедовый аналог
теоремы Колмогорова был доказан в \cite{16}.

Описание предельных мер Гиббса для данного гамильтониана является
одним из основных задач в теории гиббсовских мер. Полный анализ
множества таких мер довольно трудоемкий. По этой причине большая
часть работ по этой тематике посвящена изучению гиббсовских мер на
дереве Кэли \cite{grr,B,17, MRS,MNGM}.

В работе \cite{TMF2} был изучен $p$-адическая модель ТС с тремя
состояниями на дереве Кэли порядка $k$. Был доказан, что если
$k^2-4\not\equiv0(\operatorname{mod }p)$,
то существует единственная трансляционно-инвариантная
$p$-адическая мера Гиббса для модели ТС. В данной работе мы исследуем
случай $k^2-4\equiv0(\operatorname{mod }p)$. В этом случае будет
показана не единственность $p$-адических мер Гиббса для модели ТС.
Также, исследуем проблему ограниченности $p$-адических мер Гиббса
при любом $k$.

\subsection{$p$-адические числа и меры.} Каждое рациональное число
$x\neq 0$ может быть представлено в виде
$x=p^r\frac{n}{m}$, где $r,n\in \mathbb{Z}, m$-- положительное
число, $(n,m)=1$, причем $m$ и $n$ не делятся на $p$ и $p$ --
фиксированное простое число. $p$-Адическая норма $|x|_p$
определяется по формуле
$$
|x|_p= \left\{\begin{array}{ll}
p^{-r},& \text{ если }  x\neq 0,\\
0,& \text{ если } x=0.
 \end{array}\right.
$$

Эта норма удовлетворяет сильному неравенству треугольника:
$$
|x+y|_p\leq\max\{|x|_p,|y|_p\}.
$$
Это свойство показывает  неархимедовость нормы.

Из этого свойства непосредственно следуют следующие (свойства $p$-адической нормы):

1) если  $|x|_p\neq |y|_p$, то $|x-y|_p=\max\{|x|_p,|y|_p\}$;

2) если  $|x|_p=|y|_p$, то  $|x-y|_p\leq |x|_p$;

Пополнение поля рациональных чисел $\mathbb{Q}$ по  $p$-адической
норме приводит к полю $p$-адических чисел $\mathbb{Q}_p$ для
каждого простого $p$ \cite{29}.

Начиная с поля рациональных чисел $\mathbb{Q}$, мы можем получить
либо поле вещественных чисел $\mathbb{R}$, либо одно из полей
$p$-адических чисел $\mathbb{Q}_p$ (теорема Островского).

Каждое $p$-адическое число $x\neq0$ имеет единственное
каноническое разложение
\begin{equation}\label{ek}
x = p^{\gamma(x)}(x_0+x_1p+x_2p^2+\dots),
\end{equation}
где $\gamma=\gamma(x)\in \mathbb Z$ и $x_j$ целые числа, $0\leq
x_j \leq p - 1$, $x_0 > 0$, $j = 0, 1,2, ...$ (см
\cite{29,41,48}). В этом случае $|x|_p = p^{-\gamma(x)}$.

\begin{thm}\label{tx2} \cite{48} Уравнение
$x^2 = a$, $0\ne a =p^{\gamma(a)}(a_0 + a_1p + ...), 0\leq a_j
\leq p - 1$, $a_0 > 0$ имеет решение $x\in \Q_p$ тогда и только
тогда, когда выполняется слудующее:

1) $\gamma(a)$ четное;

2) $y^2\equiv a_0(\operatorname{mod} p)$ разрешимо, если $p\ne 2$;
$a_1=a_2=0$, если $p=2$.
\end{thm}

Множество $\mathbb Z_p=\{x\in\mathbb Q_p:\ |x|_p\leq1\}$
называется множеством целых $p$-адических чисел.
$\mathbb Z^*_p=\{x\in\mathbb Q_p:\ |x|_p\leq1\}$-- множество
$p$-адических единиц.

Следующая теорема известна как лемма Гензеля.

\begin{thm}\label{hl}\cite{29}
Пусть $F(x)=c_0+c_1x+\cdots+c_nx^n-$ многочлен с целыми
$p$-адическими коэффициентами, а
$F'(x)=c_1+2c_2x+3c_3x^2+\cdots+nc_nx^{n-1}-$ его производная.
Предположим, что $a_0\ -$ целое $p$-адическое число, для которого
$F(a_0)\equiv0(\operatorname{mod} p)$ а
$F'(a_0)\not\equiv0(\operatorname{mod} p)$. Тогда существует
единственное целое $p$-адическое число $a$, такое, что
$$F(a)=0\ \mbox{и}\ a\equiv a_0(\operatorname{mod} p).$$
\end{thm}

Для $a\in \mathbb{Q}_p$ и $r> 0$ обозначим
$$B(a, r) = \{x\in \mathbb{Q}_p : |x-a|_p < r\}.$$

$p$-адический {\it логарифм} определяется как ряд
$$\log_p(x) =\log_p(1 + (x-1)) =
\sum_{n=1}^\infty (-1)^{n+1}{(x-1)^n\over n},$$ который сходится
для  $x\in B(1, 1)$;  $p$-адическая экспонента определяется как
$$\exp_p(x) =\sum^\infty_{n=0}{x^n\over n!},$$
которая сходится для $x \in B(0, p^{-1/(p-1)})$.
\begin{lemma}\label{el} Пусть $x\in B(0, p^{-1/(p-1)}$. Тогда
$$|\exp_p(x)|_p = 1,\ \ |\exp_p(x)-1|_p = |x|_p, \ \ |\log_p(1 + x)|_p = |x|_p,$$
$$\log_p(\exp_p(x)) = x,\ \ \exp_p(\log_p(1 + x)) = 1 + x.$$
\end{lemma}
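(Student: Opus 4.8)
The five assertions are standard consequences of two valuation estimates, and I would organize the proof around them. Throughout, write $v_p$ for the additive $p$-adic valuation normalized by $|y|_p = p^{-v_p(y)}$ (so $v_p(y)=\gamma(y)$ in the notation of \eqref{ek}); then the hypothesis $x\in B(0,p^{-1/(p-1)})$ reads $v_p(x) > \tfrac{1}{p-1}$. First I would record the two arithmetic inputs: the classical formula $v_p(n!) = \tfrac{n - s_p(n)}{p-1}$, where $s_p(n)$ is the base-$p$ digit sum, which gives $v_p(n!)\le \tfrac{n-1}{p-1}$ for $n\ge 1$; and the dual bound $v_p(n)\le \tfrac{n-1}{p-1}$ for all $n\ge 1$, which follows from $p^{v_p(n)}\le n$ together with the elementary inequality $p^{k}\ge k(p-1)+1$ (an easy induction on $k$).

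For the three norm identities I would compare each series with its linear term. In $\exp_p(x)=\sum_{n\ge 0}x^n/n!$ the $n$-th term has valuation $n\,v_p(x)-v_p(n!)\ge n\,v_p(x)-\tfrac{n-1}{p-1}$, so $v_p(x^n/n!)-v_p(x)\ge (n-1)\bigl(v_p(x)-\tfrac{1}{p-1}\bigr)>0$ for $n\ge 2$. Hence the terms tend $p$-adically to $0$ (giving convergence) and, by the strong triangle inequality, the linear term strictly dominates: this yields $|\exp_p(x)-1|_p=|x|_p$, whence $|\exp_p(x)|_p=\max\{1,|x|_p\}=1$. Applying the same comparison to $\log_p(1+x)=\sum_{n\ge 1}(-1)^{n+1}x^n/n$, now using $v_p(n)\le \tfrac{n-1}{p-1}$, gives $v_p(x^n/n)-v_p(x)>0$ for $n\ge 2$ and therefore $|\log_p(1+x)|_p=|x|_p$.

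For the two functional identities I would transfer the formal power series relations $\log(\exp(X))=X$ and $\exp(\log(1+X))=1+X$, which hold in $\Q[[X]]$, to the convergent $p$-adic series. By the previous step both $\exp_p(x)-1$ and $\log_p(1+x)$ again lie in $B(0,p^{-1/(p-1)})$, so each composite is a well-defined convergent series, and the formal cancellation should survive evaluation at $x$ because an ultrametric family whose terms tend to $0$ is summable and may be regrouped and reordered at will.

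The step I expect to be the main obstacle is precisely this last transfer. Expanding $\log_p(\exp_p(x))$ produces a doubly indexed family of monomials, and to invoke a Fubini-type rearrangement one must verify that the valuations of these entries tend to $+\infty$ jointly in both indices, so that the grouping used in the formal identity is legitimate $p$-adically. The valuation bookkeeping of the first two steps, by contrast, I expect to be entirely routine.
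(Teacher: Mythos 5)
The paper gives no proof of Lemma~\ref{el}: it is stated as a standard fact and the reader is referred to \cite{29,41,48}, so there is no in-paper argument to compare against; your proposal supplies the standard textbook proof, and it is correct. The two valuation inputs $v_p(n!)\le\frac{n-1}{p-1}$ and $v_p(n)\le\frac{n-1}{p-1}$ are exactly what is needed, and your term-by-term comparison with the linear term, via the strong triangle inequality in the strict-inequality case, correctly yields the three norm identities and the convergence of both series. The one step you flag as the main obstacle --- justifying the composition identities by rearranging a doubly indexed family --- is real but closes with the estimates you have already established: setting $\delta=v_p(x)-\frac{1}{p-1}>0$, a monomial $\frac{1}{m}\prod_{i=1}^{m}x^{n_i}/n_i!$ occurring in the expansion of $\log_p(\exp_p(x))$ has valuation at least $\sum_i\bigl(n_i\delta+\frac{1}{p-1}\bigr)-v_p(m)\ge\bigl(n_1+\cdots+n_m\bigr)\delta+\frac{1}{p-1}$, which tends to $+\infty$ since $n_1+\cdots+n_m\ge m$; hence the family is summable in the ultrametric sense, may be regrouped at will, and the grouping reproduces the formal identity $\log(\exp X)=X$ in $\mathbb{Q}[[X]]$. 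The same bound (with $v_p(m!)$ in place of $v_p(m)$) handles $\exp_p(\log_p(1+x))=1+x$, where your observation that $|\log_p(1+x)|_p=|x|_p<p^{-1/(p-1)}$ guarantees the argument stays in the domain of $\exp_p$. With that one computation written out explicitly, your proof is complete.
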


Более подробно об основах $p$-адического анализа и $p$-адической
математической физики можно найти в \cite{29,41,48}.

Пусть $(X,\mathcal B)$ измеримое пространство, где $\mathcal B$
алгебра подмножеств в $X$. Функция $\mu:\mathcal B\to\mathbb Q_p$
называется $p$-адической мерой, если для любого набора
$A_1,...,A_n\in\mathcal B$ такого, что $A_i\cap A_j=\varnothing,\
i\neq j$ имеет место
$$\mu\bigg(\bigcup_{j=1}^nA_j\bigg)=\sum_{j=1}^n\mu(A_j).$$
$p$-Адическая мера $\mu$ называется вероятностной, если $\mu(X)=1$ (см.
\cite{16}).
$p$-Адическая мера $\mu$ называется {\it ограниченной}, если
$\left\{\left|\mu(A)\right|_p: A\in\mathcal B\right\}<\infty$ (см. \cite{24}).
\subsection{Дерево Кэли}

Дерево Кэли $\Gamma^k=(V,L)$ порядка $k\geq1$ есть бесконечное
дерево (граф без циклов), из каждой вершины которого выходит ровно
$k+1$ ребер, $V\ -$ множество вершин и $L\ -$ множество ребер. Две
вершины $x$ и $y$ называются {\it ближайшими соседями}, если
существует ребро $l\in L$ соединяющее их и пишется как $l=\langle x,y\rangle$.
Расстояние $d(x,y)\ -$ число ребер кратчайшей пути, соединяюшей
$x$ и $y$.

Пусть $x^0\in V$ фиксированная точка. Введем обозначения:
$$W_n=\{x\in V: d(x,x^0)=n\},\qquad V_n=\bigcup_{m=1}^nW_m,$$

и
$$S(x)=\{y\in W_{n+1}: d(x,y)=1\},\quad x\in W_n.$$

\subsection{модель ТС}

Мы рассмотрим модель ТС с тремя состояниями на дерева Кэли. В этой
модели каждой вершине $x\in V$ ставится в соответствие одно из
значений $\sigma(x)\in\{0,1,2\}$. Значения $\sigma(x)\in\{1,2\}$
означают, что вершина $x\in V$ "занята"\, и $\sigma(x)=0$
означает, что вершина $x\in V$ "вакантна". Конфигурация
$\sigma=\{\sigma(x),\ x\in V\}$ на дереве Кэли есть функция из $V$
в $\{0,1,2\}$. Конфигурации в $V_n$ и $W_n$ определяются
аналогично.

Конфигурация $\sigma$
называется допустимой на дереве Кэли, если
$\sigma(x)+\sigma(y)\notin\{0,3\}$ для любой пары
ближайщих соседей $x$ и $y$ в $V$. Обозначим через $\Omega$ множество
всех допустимых конфигураций на дереве Кэли.

Для фиксированной
$\lambda=(\lambda_0,\lambda_1,\lambda_2)\in\mathbb Q_p^3$
определим $p$-адический гамильтониан модели ТС
\begin{equation}\label{Ham}
H_{\lambda}(\sigma)=\sum_{x\in V}\log_p\lambda_{\sigma(x)},\quad
\sigma\in\Omega.
\end{equation}

\section{построение $p$-адической меры гиббса.}
Мы построим $p$-адическую меру Гиббса для модели (\ref{Ham}). Так как в
определении $p$-адической меры Гиббса используется $\exp_p(x)$, то
все ниже следующие величины должны принадлежать множеству
$$\mathcal E_p=\{x\in\mathbb
Q_p:|x|_p=1,|x-1|_p<p^{-1/(p-1)}\}.$$\\
Как и в классическом случае, мы рассмотрим специальный класс меры
Гиббса.

Для $\sigma_n\in\Omega_{V_n}$ определим $\#\sigma_n=\sum_{x\in
V_n}{\bf 1}(\sigma_n(x)\geq1))$ (т.е., $\#\sigma_n$ число занятых
вершин в $\sigma_n$).

Пусть $z:x\to z_x=(z_{0,x},z_{1,x},z_{2,x})\in\mathcal E_p^3$
векторнозначная функция на $V$. Рассмотрим случай, когда
$\lambda_0=1,$ и $\lambda_1=\lambda_2=\lambda$. Для
$\lambda\in\mathcal E_p$ рассмотрим $p$-адическое вероятностное
распределение $\mu_z^{(n)}$ на $\Omega_{V_n}$, которое
определяется как
\begin{equation}\label{mu}
\mu_z^{(n)}(\sigma_n)=Z_{z,n}^{-1}\lambda^{\#\sigma_n}\prod_{x\in
W_n}z_{\sigma_n(x),x},\qquad n=1,2,...
\end{equation}
где $Z_{z,n}$ нормируюшая константа
\begin{equation}\label{norm}
Z_{z,n}=\sum_{\omega_n\in\Omega_{V_n}}\lambda^{\#\omega_n}\prod_{x\in
W_n}z_{\omega_n(x),x}.
\end{equation}

Говорят, что $p$-адическое вероятностное распределение $\mu^{(n)}$
согласовано, если $\mbox{для всех}\ n\geq1$ и
$\sigma_{n-1}\in\Omega_{V_{n-1}},$

\begin{equation}\label{us}
\sum_{\omega_n\in\Omega_{W_n}}\mu_z^{(n)}(\sigma_{n-1}\vee\omega_n)=\mu_z^{(n-1)}(\sigma_{n-1}).
\end{equation}
В этом случае по теореме Колмогорова \cite{16} существует
единственная мера $\mu_z$ на $\Omega$ такая, что
$\mu_z(\{\sigma\big|_{V_n}=\sigma_n\})=\mu_z^{(n)}(\sigma_n)$ для всех
$n$ и $\sigma_n\in\Omega_{V_n}$.

\begin{defn}
Мера $\mu_z^{(n)}$, определенная как (\ref{mu}) удовлетворяющая
(\ref{us}) называется $p$-адической мерой Гиббса для модели (\ref{Ham}),
соответствующей функции $z:x\in V\setminus\{x^0\}\to z_x$.
\end{defn}

Если существуют две $p$-адические меры Гиббса $\mu_z$ и $\mu_t$ такие,
что только одна из них является ограниченной, то говорят, что существует
{\it фазовый переход}. Более того, если существует последовательность множеств $\{A_n\}$
такая, что $A_n\in\Omega_{V_n}$ и $|\mu_z(A_n)|_p\to0$, $|\mu_t(A_n)|_p\to\infty$
при $n\to\infty$, то говорят, что существует {\it сильний
фазовый переход}. Если существуют две ограниченные $p$-адические меры Гиббса,
то говорят, что существует {\it квази фазовый переход} \cite{MNGM}.

Следующая теорема дает условие на $z_x$, гарантирующее
согласованность распределения $\mu_z^{(n)}$.

\begin{thm}
\cite{TMF2} Вероятностное распределение $\mu_z^{(n)},\ n=1,2,...$,
заданное формулой (\ref{mu}), согласованно тогда и только тогда,
когда для любого $x\in V$ имеют место следующие равенства:
\begin{equation}\label{usz}
z_{i,x}'=\lambda\prod_{y\in
S(x)}\frac{1+z_{i,y}'}{z_{1,y}'+z_{2,y}'},\qquad i=1,2
\end{equation}
где $z_{i,x}'=\lambda z_{i,x}/z_{0,x}\in\mathcal E_p,\ i=1,2$.
\end{thm}

\section{Трансляционно-инвариантная мера Гиббса}
Решение вида $z_x=(z_1,z_2)\in\mathcal E_p^2,\ x\neq x_0$ системы
уравнений (\ref{usz}) называется {\it трансляционно-инвариантным}.
Соответствующая $p$-адическая мера Гиббса
трансляционно-инвариантного решения системы уравнений (\ref{usz})
называется трансляционно-инвариантной мерой Гиббса.

Для того, чтобы найти трансляционно-инвариантные $p$-адические меры
Гиббса для модели ТС, рассмотрим следующие уравнения
\begin{equation}\label{wand}
z_{i}=\lambda\left(\frac{1+z_{i}}{z_{1}+z_{2}}\right)^k,\quad i=1,2.
\end{equation}
\begin{thm}\label{9} \cite{TMF2} $1)$
Пусть $p=2$. Если $k$ делится на $4$, то для модели (\ref{Ham})
существует единственная трансляционно-инвариантная $2$-адическая
мера Гиббса.\\
$2)$ Пусть $p\neq2$. Если $k^2-4$ не делится на $p$,
то для модели (\ref{Ham})
существует единственная трансляционно-инвариантная $p$-адическая
мера Гиббса.
\end{thm}
\begin{rk}
Условия Теоремы \ref{9} не являются необходимыми для единственности
трансляционно-инвариантных $p$-адических мер Гиббса \cite{TMF2}.
Возникает естественный вопрос: существует ли фазовый переход для
модели (\ref{Ham}) на дереве Кэли порядка $k$. Очевидно, что при $k=2$
условие Теоремы \ref{9} не выпольняется для любого простого числа $p$.
В работе \cite{TMF2} были показаны, что при $k=2$ и $p=3$ существуют три
трансляционно-инвариантные $p$-адические меры Гиббса.
\end{rk}
В этой работе мы исследуем трансляционно-инвариантные $p$-адические меры
Гиббса для модели (\ref{Ham}) на дереве Кэли порядка два.

\begin{pro}\label{pro1}
Пусть $k=2$ и $p>3$. Тогда система уравнений (\ref{wand}) имеет единственное
решение на инвариантном множестве $\left\{z\in\mathcal E_p^2: z_1=z_2\right\}$.
\end{pro}
\begin{proof}
Пусть $z_i=t,\ i=1,2$. Тогда из (\ref{wand}) получим
$$
4t^3-\lambda(t+1)^2=0.
$$
Функция $f(t)=4t^3-\lambda(t+1)^2$ является многочленом с целыми
$p$-адическими коэффициентами. Учитывая $\lambda\in\mathcal E_p$ и $p>3$ из
$f(1)=4(1-\lambda)$ и $f'(1)=8+4(1-\lambda)$ имеем
$f(1)\equiv0(\operatorname{mod }p)$
и $f(1)\not\equiv0(\operatorname{mod }p)$. В силу леммы Гензеля
существует единственное число $t^*\in\mathcal E_p$ такое, что $f(t^*)=0$.
Это означает, что функциональное уравнение (\ref{wand}) имеет единственное
решение $z^*=(t^*, t^*)$ на множестве $\left\{z\in\mathcal E_p^2: z_1=z_2\right\}$
\end{proof}
Обозначим $M_p=\{a\in\mathbb N: a\mbox{ квадратичный вычет по модулю }p\}$.
\begin{pro}\label{pro2}
Пусть $k=2$ и $p>3$. Если
$$\lambda\in\bigcup_{a\in M_p}\bigcup_{n\in\mathbb N}\left\{x\in\mathcal E_p:
\left|16x-16-3ap^{2n}\right|_p<p^{-2n}\right\},$$
то система уравнений (\ref{wand}) имеет два
решения на инвариантном множестве $\{z\in\mathcal E_p^2: z_1\neq z_2\}$.
\end{pro}
\begin{proof}
Вычитая второе уравнение (\ref{wand}) из первого, получим
$$(z_1-z_2)\bigg(1-\lambda\frac{(2+z_1+z_2)}{(z_1+z_2)^2}\bigg)=0$$
Так как $z_1\neq z_2$, то имеем
\begin{equation}\label{wti2}
(z_1+z_2)^2-\lambda(z_1+z_2)-2\lambda=0.
\end{equation}
Решив квадратное уравнение (\ref{wti2}), получим
\begin{equation}\label{z1+z2}
z_1+z_2=\frac{\lambda\pm\sqrt{\lambda(\lambda+8)}}{2}.
\end{equation}
Так как $\lambda\in\mathcal E_p$ и $p>3$, то имеем следующие
$$\lambda=1+\lambda_1p+\lambda_2p^2+\cdots$$
и
$$\lambda+8=9+\lambda_1p+\lambda_2p^2+\cdots$$

В силу Теоремы \ref{tx2}
существуют числа $\sqrt{\lambda}$ и $\sqrt{\lambda+8}$ в $\mathbb Q_p$.
С другой стороны, $z_1$ и $z_2$ должны удовлетворять
$\left|z_1+z_2-2\right|_p<1$. Заметив
$$\sqrt{\lambda(\lambda+8)}=3+\lambda'_1p+\lambda'_2p^2+\cdots$$
получим для $z_1+z_2=\frac{\lambda+\sqrt{\lambda(\lambda+8)}}{2},$
$$
\left|z_1+z_2-2\right|_p=\left|\lambda+\sqrt{\lambda(\lambda+8)}-4\right|_p=
\left|\left(\lambda_1+\lambda'_2\right)p+\cdots\right|_p<1
$$
и для $z_1+z_2=\frac{\lambda-\sqrt{\lambda(\lambda+8})}{2}$
$$
\left|z_1+z_2-2\right|_p=\left|\lambda-\sqrt{\lambda(\lambda+8)}-4\right|_p=
\left|-6+\left(\lambda'_1-\lambda'_2\right)p+\cdots\right|_p=1.
$$

Подставляя $z_1+z_2=\frac{\lambda+\sqrt{\lambda(\lambda+8)}}{2}$
в (\ref{wand}), мы получим
\begin{equation}\label{kv}
z=\bigg(\frac{2(1+z)}{\sqrt{\lambda}+\sqrt{\lambda+8}}\bigg)^2.
\end{equation}
Следовательно, получим решения квадратного уравнения (\ref{kv})
\begin{equation}\label{z1}
z^\pm=\frac{\bigg(\sqrt{\lambda}+\sqrt{\lambda+8}\bigg)
\bigg(2\sqrt{\lambda}\pm\sqrt{2\big(\lambda-4+
\sqrt{\lambda(\lambda+8)}\big)}\bigg)}{8}.
\end{equation}

Мы должны проверить существование
$\sqrt{2\left(\lambda-4+\sqrt{\lambda(\lambda+8)}\right)}$ в
$\mathbb Q_p$ и $z^\pm\in\mathcal E_p$.\\
В силу Теоремы \ref{tx2} число
$\sqrt{2\left(\lambda-4+\sqrt{\lambda(\lambda+8)}\right)}$
существует тогда и только тогда, когда существуют
$n\in\mathbb N,\ a\in M$ и
$\varepsilon\in\mathbb Z_p$ такие, что
$$
2\left(\lambda-4+\sqrt{\lambda(\lambda+8)}\right)=p^{2n}\left(a+\varepsilon p\right)
$$
Отсюда найдем
$$
\lambda=1+\frac{3a}{16}p^{2n}+\epsilon p^{2n+1},\qquad\mbox{где }\ |\epsilon|_p\leq1,
$$
которая эквивалентно $\left|16\lambda-16-3ap^{2n}\right|_p<p^{-2n}$.
Теперь проверим
$z^\pm\in\mathcal E_p$. Пусть $\left|16\lambda-16-3ap^{2n}\right|_p<p^{-2n}$
для некоторого натурального числа $n$ и $a\in M_p$. Тогда имеем
$$
\left|z^\pm-1\right|_p=\left|\bigg(\sqrt{\lambda}+\sqrt{\lambda+8}\bigg)
\bigg(2\sqrt{\lambda}\pm\sqrt{2\big(\lambda-4+
\sqrt{\lambda(\lambda+8)}\big)}\bigg)-8\right|_p=
$$
$$\left|(4+\alpha p)\left(2+\beta p\pm\gamma p^n\right)-8\right|_p<1,
\qquad\mbox{где }\ \alpha, \beta, \gamma\in\mathbb Z_p.
$$
Это означает, что $z^\pm\in\mathcal E_p$. Таким образом, мы доказали, что
функциональное уравнение (\ref{wand}) имеет две решения
$z^{(1)}=(z^+, z^-)$ и $z^{(2)}=(z^-, z^+)$ на множестве
$\{z\in\mathcal E_p^2: z_1\neq z_2\}$,
если $\left|16\lambda-16+3ap^{2n}\right|_p<p^{-2n}$.
\end{proof}

Из Утверждений \ref{pro1} и \ref{pro2} получим следующее
\begin{thm} Пусть $k=2$ и $p>3$. Тогда верны следующие утверждения:\\
$1)$ Если
$$\lambda\notin\bigcup_{a\in M_p}\bigcup_{n\in\mathbb N}\left\{x\in\mathcal E_p:
\left|16x-16-3ap^{2n}\right|_p<p^{-2n}\right\},$$
то существует единственная
трансляционно-инвариантная $p$-адическая мера Гиббса для модели (\ref{Ham});\\
$2)$ Если
$$\lambda\in\bigcup_{a\in M_p}\bigcup_{n\in\mathbb N}\left\{x\in\mathcal E_p:
\left|16x-16-3ap^{2n}\right|_p<p^{-2n}\right\},$$
то существуют три
трансляционно-инвариантные $p$-адические меры Гиббса для модели (\ref{Ham}).
\end{thm}

\section{периодическая мера гиббса}

В этом пункте мы исследуем периодические $p$-адические меры Гиббса
для модели (\ref{Ham}) и используем групповую структуру дерева Кэли.
Как известно (см. \cite{GR2}), что существует вазимно однозначное соответствие
между множеством вершин $V$ дерева Кэли порядка $k\geq1$ и группой $G_k$,
являющейся свободным произведением $k+1$ циклических групп второго порядка
с образующими
$a_1,a_2,\dots,a_{k+1}$.

\begin{defn}\cite{grr}
Пусть $\tilde{G}$ нормальная подгруппа группы $G_k$.
Множество $z=\left\{z_x: x\in G_k\right\}$ называется $\tilde{G}$-
периодическим, если $z_{yx}=z_x$ для любого $x\in G_k$ и $x\in\tilde{G}$.
Соответсвующая $p$-адическая мера Гиббса $\mu_z$ называется $\tilde{G}$-
периодической.
\end{defn}

Очевидно, что $G_k$-периодическая мера является
трансляционно-инвариантной. Обозначим
$$
G^{(2)}=\left\{x\in G_k: \mbox{длина слова } x\ \mbox{ четная}\right\}.
$$
Это множество является нормальной подгруппой индекса два \cite{GR2}.

Следующая теорема характеризует множество всех периодических $p$-адических
мер Гиббса для модели (\ref{Ham}).
\begin{thm}\label{char}
Пусть $\tilde{G}$ нормальная подгруппа конечного индекса в $G_k$.
Тогда любая
$\tilde{G}$-периодическая $p$-адическая мера Гиббса для модели (\ref{Ham})
являетсяс либо трансляционно-инвариантной, либо $G^{(2)}$-периодической.
\end{thm}
\begin{proof} Расмотрим функцию $F:\mathcal E^2_p\to\mathcal E^2_p$,
определенную как
$$F(z)=(F_1(z),F_2(z)),\quad\mbox{где }\
F_i(z)=\frac{1+z_i}{z_1+z_2},\ i=1,2.$$
Легко проверить, что $F_i(z)=F(t),\ i=1,2$
в том и только в том случае, если $z=t$. Следовательно, имеем
$F(z)=F(t)$ тогда
и только тогда,
когда $z=t$. Из этого свойства как в доказательстве Теоремы 2 в
\cite{MRS} следует, что любая
$\tilde{G}$-периодическая мера Гиббса является либо трансляционно-инвариантной
либо $G^{(2)}$-периодической.
\end{proof}

Благодаря этой теоремы имеем, что для того, чтобы найти периодические
(не трансляционно-инвариантные) меры Гиббса
для модели (\ref{Ham}), достаточно исследовать следующую систему уравнений:
\begin{equation}\label{per}
\left\{\begin{array}{ll}
z_1=\lambda\left(\frac{1+t_1}{t_1+t_2}\right)^k,\\
z_2=\lambda\left(\frac{1+t_2}{t_1+t_2}\right)^k,\\
t_1=\lambda\left(\frac{1+z_1}{z_1+z_2}\right)^k,\\
t_2=\lambda\left(\frac{1+z_2}{z_1+z_2}\right)^k,\\
z_1\neq t_1,\ z_2\neq t_2.
\end{array}\right.
\end{equation}
Мы рассмотрим (\ref{per}) при $k=2$. Предположим, что $z_1=z_2=z$. Тогда
из (\ref{per}) получим
$$
z=f(f(z)),\qquad\mbox{где } f(z)=\lambda\left(\frac{1+z}{2z}\right)^2.
$$
Заметим, что уравнение $f(f(z))-z=0$ содержит решение уравнения $f(z)-z=0$.
Но нас интересует только периодические (не являющиеся трансляционно-инвариантными)
решения. Поэтому рассмотрим уравнение
$$
\frac{f(f(z))-z}{f(z)-z}=0,
$$
которое эквивалентно
\begin{equation}\label{kv}
\lambda z^2-2(2-\lambda)z+\lambda=0.
\end{equation}
Это уравнение имеет решения в $\mathbb Q_p$
$$
z^\pm=\frac{2-\lambda\pm2\sqrt{1-\lambda}}{\lambda},
$$
если существует $\sqrt{1-\lambda}$ в $\mathbb Q_p$.

Для того, чтобы решении $z^\pm$ уравнения (\ref{kv}) были искомымы, надо
проверить $z^\pm\in\mathcal E_p$ и $f(z^\pm)-z^*\neq0$.
Сначало мы иследуем при каких $\lambda\in\mathcal E_p$ число
$\sqrt{1-\lambda}$ существует в $\mathbb Q_p$. Затем,
проверим $z^\pm\in\mathcal E_p$ и $f(z^\pm)-z^*\neq0$.
\begin{lemma}
Пусть $\lambda\in\mathcal E_p$. Число $\sqrt{1-\lambda}$ существует в $\mathbb Q_p$
тогда и только тогда, когда
\begin{equation}\label{lambda1}
\lambda\in\bigcup_{a\in M_p}\bigcup_{n\in\mathbb N}\left\{x\in\mathcal E_p:
\left|x-1+ap^{2n}\right|_p<p^{-2n}\right\},\quad\mbox{если } p>2
\end{equation}
и
\begin{equation}\label{lambda2}
\lambda\in\bigcup_{n\in\mathbb N}\left\{x\in\mathcal E_2:
\left|x-1+2^{2n}\right|_2<2^{-2n-2}\right\},\quad\mbox{если } p=2.
\end{equation}
\end{lemma}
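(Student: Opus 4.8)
The plan is to reduce the statement to Theorem \ref{tx2} applied to the radicand $a=1-\lambda$. The first step is to pin down the order $\gamma(1-\lambda)$ from the hypothesis $\lambda\in\mathcal E_p$: since $|\lambda-1|_p<p^{-1/(p-1)}$ and $|1-\lambda|_p$ is a power of $p$, this forces $\gamma(1-\lambda)\geq1$ when $p>2$ and $\gamma(1-\lambda)\geq2$ when $p=2$. We may assume $\lambda\neq1$, so that $1-\lambda\neq0$ and Theorem \ref{tx2} applies to $a=1-\lambda$ with canonical expansion $1-\lambda=p^{\gamma(1-\lambda)}(a_0+a_1p+\cdots)$. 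The rest of the proof is simply the translation of conditions (1)--(2) of Theorem \ref{tx2} into the membership conditions \eqref{lambda1} and \eqref{lambda2}, treated separately for $p>2$ and $p=2$.

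For $p>2$, Theorem \ref{tx2} says $\sqrt{1-\lambda}$ exists in $\mathbb Q_p$ exactly when (1) $\gamma(1-\lambda)$ is even and (2) $a_0$ is a quadratic residue mod $p$. For the forward implication I would set $n=\gamma(1-\lambda)/2$ (so $n\geq1$) and $a=a_0\in\{1,\dots,p-1\}$; then $a\in M_p$ by (2), and subtracting the leading term gives $|(1-\lambda)-a_0p^{2n}|_p\leq p^{-(2n+1)}<p^{-2n}$, i.e. $|\lambda-1+ap^{2n}|_p<p^{-2n}$, which is membership in \eqref{lambda1}. Conversely, if this inequality holds for some $a\in M_p$, $n\in\mathbb N$, then $p\nmid a$ gives $|ap^{2n}|_p=p^{-2n}$, and the strong (isosceles) triangle inequality — property (1) of the $p$-adic norm stated above — yields $|1-\lambda|_p=p^{-2n}$, so $\gamma(1-\lambda)=2n$ is even; moreover the coefficient of $p^{2n}$ in $1-\lambda$ equals $a\bmod p$, a quadratic residue, giving (2). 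Theorem \ref{tx2} then produces $\sqrt{1-\lambda}$.

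For $p=2$ the only admissible leading digit is $a_0=1$, and Theorem \ref{tx2} requires in addition $a_1=a_2=0$. Writing $\gamma(1-\lambda)=2n$, the conditions $a_0=1,\ a_1=a_2=0$ mean $1-\lambda=2^{2n}(1+0\cdot2+0\cdot4+\cdots)$, i.e. $(1-\lambda)-2^{2n}$ has order at least $2n+3$, which is precisely $|\lambda-1+2^{2n}|_2<2^{-2n-2}$, the set \eqref{lambda2}. The converse once more uses the isosceles property: from $|(1-\lambda)-2^{2n}|_2<2^{-2n-2}<2^{-2n}=|2^{2n}|_2$ one recovers $|1-\lambda|_2=2^{-2n}$, so $\gamma(1-\lambda)=2n$ is even, while the strict bound $2^{-2n-2}$ forces the two digits $a_1,a_2$ to vanish, verifying the $p=2$ hypotheses of Theorem \ref{tx2}.

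The digit bookkeeping is routine; the one point demanding care is the exponent in the $p=2$ case. The threshold $2^{-2n-2}$ is not accidental — it is exactly the encoding of the two vanishing digits $a_1=a_2=0$ as a single valuation inequality, and choosing it correctly (rather than $2^{-2n-1}$ or $2^{-2n-3}$) is where I expect the main risk of error. In both parities the converse hinges on recovering the exact order $\gamma(1-\lambda)$ from an approximate equality, and this is precisely where the strong triangle inequality does the essential work.
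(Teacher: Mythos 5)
Your proof is correct and follows essentially the same route as the paper: both reduce the question to Theorem \ref{tx2} applied to $1-\lambda$ and translate its conditions (even order; quadratic-residue leading digit for $p>2$; $a_1=a_2=0$ for $p=2$) into the stated valuation inequalities. You are in fact slightly more careful than the paper, which only sketches one direction of the equivalence and leaves the isosceles-triangle argument for the converse implicit.
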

\begin{proof}
Пусть $p=2$. Тогда из $\lambda\in\mathcal E_2$ получим
$$
\lambda=1+\lambda_22^2+\lambda_32^3+\cdots,\quad\mbox{где }\ \lambda_i\in\{0,1\},\ i=2,3,\dots.
$$
Отсюда, в силу Теоремы \ref{tx2} имеем
$$
1-\lambda=2^{2n}\left(1+\lambda'_{2n+3}2^3+\lambda'_{2n+4}2^4+\cdots\right),\qquad n\in\mathbb N,
$$
которое эквивалентно $\left|\lambda-1+2^{2n}\right|_2<2^{-2n-2}$.\\

Пусть $p>2$. Тогда из $\lambda\in\mathcal E_p$ получим
$$
\lambda=1+\lambda_1p+\lambda_2p^2+\cdots,\quad\mbox{где }\ \lambda_i\in\{0,1,\dots,p-1\},\ i=1,2,\dots.
$$
Тогда в силу Теоремы \ref{tx2} имеем
$$
1-\lambda=p^{2n}\left(a+\lambda'_{2n+1}p+\lambda'_{2n+2}p^2+\cdots\right),\quad n\in\mathbb N,\ a\in M_p.
$$
Следовательно, получим $\left|\lambda-1+ap^{2n}\right|_p<p^{-2n}$.
\end{proof}

Теперь проверим $z^\pm\in\mathcal E_p$. Заметив $|\lambda|_p=1$ и
$\left|1-\lambda\right|_p=p^{-2n}$ и используя свойство $p$-адической нормы, получим
$$
\left|z^\pm-1\right|_p=\frac{\left|2\left(1-\lambda\pm\sqrt{1-\lambda}\right)\right|_p}{|\lambda|_p}=
\left|2p^n\right|_p<p^{-1/(p-1)}.
$$
Это означает, что $z^\pm\in\mathcal E_p$.

Покажем $f(z^\pm)-z^\pm\neq0$.
$$
f(z^\pm)-z^\pm=\lambda\left(\frac{1+z^\pm}{2z^+}\right)^2-z^\pm=
\frac{-4\sqrt{1-\lambda}\left(1\pm\sqrt{1-\lambda}\right)^2}
{\lambda^2}.
$$
Так как $0<|1-\lambda|_p<1$, то имеем $\left|f(z^\pm)-z^\pm\right|_p\neq0$.
Следовательно, $z=(z^+,z^-), t=(z^-,z^+)$ и $z=(z^-,z^+), t=(z^+,z^-)$
являются решениями (\ref{per}) при $k=2$.

Таким образом, мы доказали следующее
\begin{pro} Пусть $k=2$. Тогда (\ref{per}) имеет по крайней мере два решения
на $\mathcal E^4_p$, если имеет место (\ref{lambda1}) и (\ref{lambda2}).
\end{pro}

Из этого утверждения получим следующую теорему
\begin{thm} Пусть имеет место (\ref{lambda1}) при $p>2$ (и (\ref{lambda2}) при $p=2$).
Тогда для модели (\ref{Ham}) существуют по крайней мере две периодические
$p$-адические меры Гиббса на дереве Кэли порядка два.
\end{thm}

\section{Ограниченность $p$-адических мер Гиббса}
В этом пункте мы будем исследовать ограниченности
$p$-адических мер Гиббса для модели (\ref{Ham}). Напомним, что $p$-адическая
вероятностная мера может быть неограниченной.

\begin{lemma}\label{rec} Пусть $\mu_z$ есть $p$-адическая мера Гиббса для модели (\ref{Ham}).
Тогда
для нормирующей константы $($\ref{norm}$)$ имеет место следующая
рекуррентная формула:
$$Z_{z,n+1}=A_{z,n}Z_{z,n},\qquad n=1,2,\dots,$$
где $A_{z,n}$ определяется по формуле $($\ref{A_n}$)$.
\end{lemma}
\begin{proof}
Пусть функция $z':x\to z_x'=(z_{1,x}',z_{2,x}')\in\mathcal E_p^2$
удовлетворяет функционльному уравнению (\ref{usz}). Тогда для любого
$z_{0,x}\in\mathcal E_p,\ x\in V$ существует функция $a_z(x)$ такая, что
\begin{equation}\label{a(x)}
\begin{array}{ll}
\prod_{y\in S(x)}\left(\lambda z_{1,y}+\lambda z_{2,y}\right)=a_z(x)z_{0,x},\\[2mm]
\prod_{y\in S(x)}\left(z_{0,y}+\lambda z_{i,y}\right)=a_z(x)z_{i,x},\ i=1,2
\end{array}
\end{equation}
где $z_{i,x}=z_{0,x}z'_{i,x}/\lambda,\ i=1,2$. Тогда для любой конфигурации
$\sigma\in\Omega_{V_{n}}$ имеет место следующие
$$
\prod_{x\in W_n}\prod_{y\in S(x)\atop \sigma(x)=0}(\lambda z_{1,y}+\lambda z_{2,y})
\prod_{y\in S(x)\atop \sigma(x)=1}(z_{0,y}+\lambda z_{1,y})
\prod_{y\in S(x)\atop \sigma(x)=2}(z_{0,y}+\lambda z_{2,y})=
$$
\begin{equation}\label{A_n}
\prod_{x\in W_n}a_z(x)z_{\sigma(x),x}=A_{z,n}\prod_{x\in W_{n}}z_{\sigma(x),x},
\qquad\mbox{где }A_{z,n}=\prod_{x\in W_n}a_z(x).
\end{equation}
Учитывая (\ref{mu}) и (\ref{norm}) из (\ref{A_n}) получаем
$$
1=\sum_{\sigma\in\Omega_{V_{n}}}\sum_{\omega\in\Omega_{W_{n+1}}}\mu_z^{(n+1)}(\sigma\vee\omega)=
\sum_{\sigma\in\Omega_{V_{n}}}\sum_{\omega\in\Omega_{W_{n+1}}}\frac{1}{Z_{z,n+1}}\lambda^{\#\sigma\vee\omega}
\prod_{x\in{W_{n+1}}}z_{\omega(x),x}=
$$
$$
\frac{A_{z,n}}{Z_{z,n+1}}\sum_{\sigma\in\Omega_{V_n}}\lambda^{\#\sigma}\prod_{x\in{W_n}}z_{\sigma(x),x}=
\frac{A_{z,n}}{Z_{z,n+1}}Z_{z,n}.
$$
Отсюда, $Z_{z,n+1}=A_{z,n}Z_{z,n}$.
\end{proof}
\begin{thm} $p$-адическая
мера Гиббса для модели (\ref{Ham}) является ограниченной тогда и только
тогда, когда $p\neq2$.
\end{thm}
\begin{proof}
Пусть $z'=(z_{1,x}', z_{2,x}')\in\mathcal E_p^2$ решение
функционального уравнения (\ref{usz}) и $\mu_z$ $p$-адическая мера Гиббса
соответсвующей функции
$z=\left(z_{0,x}, \frac{z_{0,x}z_{1,x}'}{\lambda}, \frac{z_{0,x}z_{2,x}'}{\lambda}\right)$,
где $z_{0,x}\in\mathcal E_p$. Тогда в силу Леммы \ref{rec} при всех $n\geq1$ имеем
$$
Z_{z,n}=\prod_{x\in V_{n-1}}a_z(x),\qquad\mbox{где }\ a_z(x)=z_{0,x}^{k-1}(z_{1,x}'+z_{2,x}')^k.
$$
Так как $z_{0,x}, z_{1,x}', z_{2,x}'\in\mathcal E_p$, то имеем
$$
\left|a_z(x)\right|_p=\left\{\begin{array}{ll}
1, & \mbox{если }p\neq2,\\
2^{-k}, & \mbox{если }p=2.
\end{array}\right.
$$
Следовательно,
$$
\left|Z_{z,n}\right|_p=\left\{\begin{array}{ll}
1, & \mbox{если }p\neq2,\\
2^{-k|V_{n-1}|}, & \mbox{если }p=2.
\end{array}\right.
$$
Отсюда для любой
конфигурации $\sigma\in\Omega$ получим
$$
\left|\mu_z^{(n)}(\sigma)\right|_p=\frac{\left|\lambda^{\#\sigma}
\prod_{x\in W_n}z_{\sigma(x),x}\right|_p}{\left|Z_{z,n}\right|_p}=
\left\{\begin{array}{ll}
1, & \mbox{если }p\neq2,\\
2^{k|V_{n-1}|}, & \mbox{если }p=2.
\end{array}\right.
$$
Это означает, что мера $\mu_z$ является ограниченной тогда и только тогда,
когда $p\neq2$.
\end{proof}
\begin{cor}
Для модели (\ref{Ham}) не существует фазового перехода. В частности, не
существует сильного фазового перехода.
\end{cor}
\begin{cor}
Пусть $k=2$ и $p>3$. Если
$$\lambda\in\bigcup_{a\in M_p}\bigcup_{n\in\mathbb N}\left\{x\in\mathcal E_p:
\left|16x-16-3ap^{2n}\right|_p<p^{-2n}\right\},$$
то для модели (\ref{Ham}) существует квази фазовый переход.
\end{cor}


\begin{thebibliography}{99}

\bibitem{1a} Albeverio S.,  Karwowski W., {\it A random walk on $p$-adics, the generator and its spectrum},
Stochastic Processes Appl. {\bf 53} (1994), 1--22.

\bibitem{3a} Albeverio S., Zhao X., {\it Measure-valued branching processes
associated with random walks on $p$-adics}, Ann. Probab. {\bf 28} (2000), 1680--1710.

\bibitem{51} Yasuda K., {\it Extension of measures to infinite-dimensional
spaces over $p$-adic field}, Osaka J. Math. {\bf 37},
(2000), 967--985.

\bibitem{B}  Bleher P.M.,  Ruiz J., Zagrebnov V.A. {\it On the purity of the limiting
Gibbs state for the Ising model on the Bethe lattice}, Journ. Statist. Phys. {\bf 79} (1995),
473--482.

\bibitem{grr} Gandolfo D., Rozikov U.A., Ruiz J., {\it On $p$-adic Gibbs Measures for Hard
Core Model on a Cayley Tree}. Markov Processes Relat. Fields, {\bf 18} (2012), 701--720

\bibitem{GR2} Ganikhodjaev N.N., Rozikov U.A., {\it Description of periodic extreme
Gibbs measures of some lattice model on the Cayley tree}, Theor. Math. Phys. {\bf111}, (1997), 480--486.

\bibitem{16} Ganikhodjaev N.N., Mukhamedov F.M., Rozikov U.A., {\it Phase transitions
of the Ising model on $\mathbb Z$ in the $p$-adic number field}, Uzbek. Math. J. {\bf4}, (1998),
23--29.

\bibitem{17}  Georgii H.-O., Gibbs Measures and Phase Transitions (W. de Gruyter, Berlin, 1988).

\bibitem{TMF2} Khakimov O.N., {\it $p$-Adic Gibbs measures for the model of Hard
Spheres with three states on the Cayley tree}. Theor. Math. Phys., {\bf177}(1), (2013), 1339--1351.

\bibitem{24} Khrennikov A. Yu., Non-Archimedean Analysis: Quantum Paradoxes, Dynamical Systems and Biological Models
(Kluwer, Dordrecht, 1997).

\bibitem{29} Koblitz N., $p$-Adic Numbers, $p$-adic Analysis, and Zeta-Functions (Springer, Berlin, 1977).

\bibitem{34} Marinari E., Parisi G., {\it On the $p$-adic five point function}, Phys. Lett. B {\bf 203}, (1988) 52--54.

\bibitem{MRS} Martin J.B, Rozikov U.A., Suhov Yu.M. {\it A three state Hard-Core model on a Cayley tree},
Jour. Nonlinear Math. Phys. {\bf 12}(3) (2005), 432--448.

\bibitem{MNGM} Mukhamedov F.M., {\it On $p$-adic quasi Gibbs measures for $q+1$-state
Potts model on the Cayley tree},$p$-adic Numb., Ultram. Anal. Appl., {\bf2} (2010), 241--251.

\bibitem{40}  van Rooij A. C. M., {\it Non-Archimedean Functional Analysis}, (M. Dekker, New York, 1978).

\bibitem{rh} Rozikov U.A., Khakimov O.N., {\it $p$-Adic Gibbs Measures and Markov Random
Fields on countable graphs}. Theor. Math. Phys. {\bf 175}:1 (2013), 518--525.

\bibitem{rsh} Rozikov U.A., Shoyusupov Sh.A., {\it Fertile HC models with three states on a Cayley tree}.
Theor. Math. Phys. {\bf 156}:3 (2008), 1319--1330.

\bibitem{41} Schikhof W.H., Ultrametric Calculus (Cambridge Univ. Press, Cambridge, 1984).

\bibitem{48} Vladimirov V.S.,  Volovich I. V., Zelenov E. V., $p$-Adic Analysis and Mathematical Physics (World Sci., Singapore, 1994).

\end{thebibliography}
\end{document}